\documentclass[11pt]{article}
\usepackage[left=1in,top=1in,right=1in,bottom=1in]{geometry}
\usepackage{times}



\usepackage{verbatim}
\usepackage{amsmath}
\usepackage{amssymb}
\usepackage{amsthm}
\usepackage{rotating}
\usepackage{algorithm}
\usepackage[noend]{algpseudocode}

\usepackage{silence}
\WarningFilter{latex}{Text page}

\usepackage{tabularx}
\usepackage{graphicx}
\usepackage{url}
\usepackage{multirow}
\usepackage{subfig}

\usepackage[noend]{algpseudocode}
\usepackage{siunitx}

\usepackage{array}      

\newtheorem{theorem}{Theorem}
\newtheorem{corollary}{Corollary}
\newtheorem{definition}{Definition}

\newtheorem{proposition}{Proposition}

\DeclareMathOperator*{\argmin}{arg\,min}

\frenchspacing
\setlength{\pdfpagewidth}{8.5in}
\setlength{\pdfpageheight}{11in}

\setcounter{secnumdepth}{2}
\allowdisplaybreaks

\pdfminorversion=5

\begin{document}

\title{Consistent Opponent Modeling in Imperfect-Information Games}

\author{Sam Ganzfried\\
Ganzfried Research \\
sam.ganzfried@gmail.com
}

\date{\vspace{-5ex}}

\maketitle

\begin{abstract}
The goal of agents in multi-agent environments is to maximize total reward against the opposing agents that are encountered. Following a game-theoretic solution concept, such as Nash equilibrium, may obtain a strong performance in some settings; however, such approaches fail to capitalize on historical and observed data from repeated interactions against our opponents. Opponent modeling algorithms integrate machine learning techniques to exploit suboptimal opponents utilizing available data; however, the effectiveness of such approaches in imperfect-information games to date is quite limited. We show that existing opponent modeling approaches fail to satisfy a simple desirable property even against static opponents drawn from a known prior distribution; namely, they do not guarantee that the model approaches the opponent's true strategy even in the limit as the number of game iterations approaches infinity. We develop a new algorithm that is able to achieve this property and runs efficiently by solving a convex minimization problem based on the sequence-form game representation using projected gradient descent. The algorithm is guaranteed to efficiently converge to the opponent's true strategy under standard Bayesian identifiability and visitation assumptions, given observations from gameplay and possibly additional historical data if it is available.
\end{abstract}

\section{Introduction}
\label{se:intro}
There has been a surge of progress in development of efficient algorithms for computation (and approximation) of game-theoretic solution concepts in imperfect-information games. However, such approaches do not account for the availability of data based on observations of our opponents' play. Against suboptimal opponents we can obtain significantly higher payoffs in practice by integrating techniques that utilize current and historical data. Opponent modeling contains many additional challenges not present in computation of traditional game-theoretic solution concepts, and research in this area is significantly more nascent. As we will see, existing opponent modeling approaches fail to accurately model static opponents (whose strategy does not change over time) even in small two-player games. This is perhaps the simplest possible setting, and is only the tip of the iceberg for research in this field. It is important to solve this setting if we want to have any hope to address further complexities, such as dynamic opponents and multiple opposing agents (whose strategies may or may not be correlated).

In perfect-information games, opponent modeling is relatively straightforward. First consider the case of simultaneous-move games, often represented using the normal form. Let $S_i$ denote the set of pure strategies for player $i$, $S$ denote the set of strategy profiles (vectors of strategies for each player), and $u_i: S \rightarrow \mathbb{R}$ the utility function for player $i$. For now suppose the game has two players and that we play the role of player 1. In the absence of any historical data on our opponent, we can use a prior distribution that is Dirichlet with mean on the mixed strategy that plays each pure strategy with equal probability. This can be done by initializing $m^0_{s_{2,j}} = \alpha$ for all $s_{2,j} \in S_{2},$ where $\alpha$ is a positive constant and $S_{2}$ is the pure strategy space of the opponent. At each iteration $t$, we assume that the opponent will play pure strategy $s_{2,j}$ with probability $\frac{m^t_{s_{2,j}}}{\sum_{k}m^t_{s_{2,k}}},$ and we can simply play any best response to this strategy. If the opponent plays $s_{2,j}$ at time $t$, then we set $m^t_{s_{2,j}} = m^{t-1}_{s_{2,j}} + 1$, and otherwise we set $m^t_{s_{2,j}} = m^{t-1}_{s_{2,j}}.$ Then our new opponent model will again be the mean of the new Dirichlet posterior distribution given our prior and observations. By the law of large numbers, the opponent model will approach the opponent's true static mixed strategy as the number of game iterations $t$ approaches infinity. If historical data is available, then we could use a Dirichlet prior with a different mean strategy by modifying the initial values $m^0_{s_{2,j}}$ accordingly. Thus, in two-player normal-form games, we can straightforwardly perform efficient opponent modeling and exploitation against static mixed strategies. This approach can also be straightforwardly extended to apply to more than two players whose strategies are played independently. In perfect-information sequential games, opponent modeling is also straightforward. Such games are typically represented as trees. We can simply apply the same procedure as above independently at each decision node of the tree. This also applies whether or not chance moves are allowed in the game tree.

Opponent modeling in imperfect-information games is significantly more challenging because we do not observe the private information of the opponent and may not be sure which nodes of the tree are reached: so we cannot simply maintain counters at each of the opponent's decision nodes. Imperfect-information games are modeled using extensive-form game trees, where play proceeds from the root node to a terminal leaf node at which point all players receive payoffs. Each non-terminal node has an associated player (possibly \emph{chance}) that makes the decision at that node. These nodes are partitioned into \emph{information sets}, where the player whose turn it is to move cannot distinguish among the states in the same information set. Therefore, in any given information set, a player must choose actions with the same distribution at each state contained in the information set. If no player forgets information that they previously knew, we say that the game has \emph{perfect recall}. A (mixed) \emph{strategy} for player $i,$ $\boldsymbol{\sigma}_i \in \Sigma_i,$ is a function that assigns a probability distribution over all actions at each information set belonging to $i$. 

To perform opponent modeling we assume that we are in the \emph{repeated game setting} where the game is repeated for some number of iterations. This number may be finite, infinite, or unknown (and potentially infinite). Defining repeated games in the normal-form setting is relatively straightforward. At each iteration $t$, we know our own mixed strategy $\boldsymbol{\sigma}^t_i,$ as well as the vector of pure strategies selected by all players $\mathbf{s}^t$ (though of course we do not know opponents' mixed strategies). We remember all of these strategies for all subsequent iterations. Therefore, at iteration $t$, our strategy can be dependent on $\boldsymbol{\sigma}^1_i,\ldots,\boldsymbol{\sigma}^{t-1}_i$ as well as $\mathbf{s}^1,\ldots,\mathbf{s}^{t-1}$. In perfect-information sequential games, we know our own mixed strategy $\boldsymbol{\sigma}^t_i,$ as well as the actions taken by all players at nodes along the path of play (we do not know what an opposing player would have done at a node that was not reached). Again, at time $t$ our strategy can be dependent on these values for times from 1 to $t-1.$

In imperfect-information games the situation is more complex. For some game iterations we may observe the full trajectory of nodes along the path of play, similarly to the perfect-information case; however, in other iterations we may only know a set of paths of play that are consistent with the true trajectory. For example, in poker if player 1 bets and player 2 folds, then player 1 wins the hand and player 2's card is not revealed; however, if player 1 bets and player 2 calls, then a \emph{showdown} occurs and both cards are revealed, with the pot going to the player with the superior hand. In general, imperfect-information games exhibit partial observability of opponents' private information, which may lead to uncertainty about the path of play taken. In order to define the concept of a repeated imperfect-information game, we must introduce an \emph{observability function}. Recall that if $L$ is the set of leaf nodes, for each player $i$ there is a function $u_i: L \rightarrow \mathbb{R},$ that gives the utility to player $i$ if $\ell \in L$ is reached. Note that we can represent a leaf node by the sequence of actions along the trajectory from the root node. Let $P(L)$ be the power set of $L.$ For each player, the observability function $o_i : L \to P(L)$ maps each leaf node $\ell \in L$ to the set of trajectories that are possible from player $i$'s perspective when $\ell$ is reached (note that player $i$ does not necessarily know $\ell,$ only $o_i(\ell)$). Player $i$'s strategy at time $t$ can now be dependent on their mixed strategies $\boldsymbol{\sigma}^j_i$ for $j = 1,\ldots,t-1$ as well as $o_i(\ell_j)$ for $j = 1,\ldots,t-1.$ 

Now that we have defined the repeated game setting for imperfect-information games, we can consider opponent modeling algorithms for this setting. It turns out that the general goal is still the same; given a prior distribution (possibly an uninformative Dirichlet distribution whose mean selects each available action with equal probability) and our observations made during gameplay, we would like our model for the opponent's strategy to be the mean of the posterior distribution. This is justified by the following theorem and corollary~\cite{Ganzfried18b:Bayesian}. Theorem~\ref{th:mean} shows that the expected payoff of a mixed strategy $\boldsymbol{\sigma}_i$ against a probability distribution with pdf $f_{-i}$ over mixed strategies equals the expected payoff of playing $\boldsymbol{\sigma}_i$ against the mean $\overline{f_{-i}}$ of $f_{-i}$. This applies to both normal-form and extensive-form games (with perfect and imperfect information) as well as to any number of players ($\boldsymbol{\sigma}_{-i}$ can be a joint strategy profile for all opposing agents). From this result Corollary~\ref{co:posterior-mean} immediately follows if we let $f_{-i}$ be the posterior probability distribution for the opponent's strategy $\boldsymbol{\sigma}_{-i}$ given observations $\mathbf{x}$, denoted as $p(\boldsymbol{\sigma}_{-i} | \mathbf{x})$.

\begin{theorem}
$$u_i(\boldsymbol{\sigma}_i, \overline{f_{-i}}) = u_i(\boldsymbol{\sigma}_i, f_{-i}).$$ That is, the payoff against the mean of a strategy distribution equals the payoff against the full distribution~\cite{Ganzfried18b:Bayesian}. 
\label{th:mean}
\end{theorem}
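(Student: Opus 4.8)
The plan is to exploit the fact that, once our own strategy $\sigma_i$ is held fixed, the expected utility is an \emph{affine} function of a suitable representation of the opponent's (possibly joint) strategy, and then simply push an expectation through that affine map. First I would pin down the representation: in the strategic form this is immediate, since
$$u_i(\sigma_i, \sigma_{-i}) = \sum_{s_{-i} \in S_{-i}} \left(\prod_{j \neq i} \sigma_j(s_j)\right) u_i(\sigma_i, s_{-i}),$$
which is linear in the probability vector that $\sigma_{-i}$ places over pure profiles $s_{-i}$. For extensive-form games (perfect or imperfect information) behavioral strategies do \emph{not} have this property, because the probability of reaching a leaf is a product of the opponent's action probabilities over the information sets along the path; so I would instead represent $\sigma_{-i}$ as a distribution over pure strategies or, equivalently for perfect-recall games, as a realization plan in the sequence form, for which the reach probability of each leaf $\ell \in L$, and hence $u_i(\sigma_i, \cdot)$, is genuinely linear.

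Next I would unfold the definition of the payoff against a distribution, $u_i(\sigma_i, f_{-i}) = \mathbb{E}_{\sigma_{-i} \sim f_{-i}}\big[u_i(\sigma_i, \sigma_{-i})\big]$, and invoke linearity of expectation together with the affine property just established to move the expectation inside: $\mathbb{E}_{\sigma_{-i} \sim f_{-i}}\big[u_i(\sigma_i, \sigma_{-i})\big] = u_i\big(\sigma_i, \mathbb{E}_{\sigma_{-i} \sim f_{-i}}[\sigma_{-i}]\big)$. The final step is to identify $\mathbb{E}_{\sigma_{-i} \sim f_{-i}}[\sigma_{-i}]$ — the componentwise mean of the chosen representation — with the mean strategy $\overline{f_{-i}}$; this is legitimate precisely because the representation lives in a convex, closed set (the probability simplex over pure profiles, or the polytope of realization plans), so the mean is itself a valid strategy of the same type.

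The one point that needs care — and the main obstacle — is the choice of representation of ``the mean $\overline{f_{-i}}$ of $f_{-i}$'': the identity can fail if one instead averages behavioral strategies information-set-by-information-set, because of the multilinearity noted above, so the proof must commit to the distribution-over-pure-strategies (equivalently, sequence-form) view. Once that is fixed, the generalizations claimed in the surrounding text cost nothing: $\sigma_{-i}$ may be taken to be an arbitrary, possibly correlated, joint distribution over the opponents' pure profiles, $u_i(\sigma_i, \cdot)$ remains linear in that joint distribution, and the argument is entirely insensitive to the number of opponents and to the observability function, since those never enter the computation of $u_i$ once $\sigma_i$ and $\sigma_{-i}$ are specified.
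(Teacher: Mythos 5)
The paper does not actually prove Theorem~\ref{th:mean}; it imports the result from the cited reference \cite{Ganzfried18b:Bayesian}, so there is no in-paper proof to compare against. Your argument is the standard one used to establish this result and it is correct: fix $\sigma_i$, observe that $u_i(\sigma_i,\cdot)$ is affine in an appropriate representation of the opponent's strategy, write $u_i(\sigma_i, f_{-i})$ as an expectation, and pull the expectation inside the affine map. The detail you flag is the genuinely important one and is worth making explicit: the identity holds only when the ``mean'' is taken in a representation for which $u_i(\sigma_i,\cdot)$ is truly linear --- a distribution over pure strategies, or equivalently (by Kuhn's theorem, for perfect recall) a realization plan in the sequence form, where $u_1(\mathbf{x},\mathbf{y})=\mathbf{x}^T\mathbf{A}\mathbf{y}$ is linear in $\mathbf{y}$ for fixed $\mathbf{x}$. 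Averaging behavioral strategies information-set-by-information-set can fail because leaf reach probabilities are products over the opponent's information sets along a path. This caveat is consistent with, and indeed explains, the paper's choice to formulate its algorithm over sequence-form realization plans. Your handling of the multi-opponent case (treating $\sigma_{-i}$ as a possibly correlated joint distribution over pure profiles, since the mean of a distribution over product strategies need not itself be a product) also matches the paper's parenthetical remark that $\sigma_{-i}$ can be a joint profile. I see no gap; the only minor housekeeping is that the interchange of expectation and the affine map requires integrability, which is immediate since payoffs are bounded and the strategy polytope is compact.
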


\begin{corollary}
$u_i(\boldsymbol{\sigma}_i, \overline{p(\boldsymbol{\sigma}_{-i} | \mathbf{x})}) = u_i(\boldsymbol{\sigma}_i, p(\boldsymbol{\sigma}_{-i} | \mathbf{x}))$~\cite{Ganzfried18b:Bayesian}. 
\label{co:posterior-mean}
\end{corollary}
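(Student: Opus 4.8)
The plan is to obtain Corollary~\ref{co:posterior-mean} as a direct instantiation of Theorem~\ref{th:mean}. Theorem~\ref{th:mean} holds for an \emph{arbitrary} probability distribution over the opponent's mixed-strategy space with pdf $f_{-i}$; the corollary is simply the special case in which this distribution is taken to be the Bayesian posterior $p(\sigma_{-i} \mid x)$ over $\Sigma_{-i}$ given the observations $x$ accumulated during repeated play. So the only work is to verify that $p(\sigma_{-i} \mid x)$ is a legitimate object to substitute for $f_{-i}$ in the theorem, after which the equality falls out with no further argument.

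First I would recall that, by Bayes' rule, $p(\sigma_{-i} \mid x) = \frac{p(x \mid \sigma_{-i})\, p(\sigma_{-i})}{\int_{\Sigma_{-i}} p(x \mid \sigma'_{-i})\, p(\sigma'_{-i})\, d\sigma'_{-i}}$, where $p(\sigma_{-i})$ is the prior pdf over $\Sigma_{-i}$ and $p(x \mid \sigma_{-i})$ is the likelihood of the observed sequence $x$ (built from the observability function $o_i$ applied to the realized leaf nodes). I would check that this is a bona fide pdf over $\Sigma_{-i}$: it is non-negative because the prior and likelihood are, and it integrates to one by construction, provided the normalizing constant (the marginal likelihood) is strictly positive — which holds whenever the realized $x$ has positive probability under the prior predictive distribution, e.g.\ when the prior has full support, as in the uninformative Dirichlet case emphasized in the text. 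Hence $p(\sigma_{-i} \mid x)$ satisfies exactly the hypothesis imposed on $f_{-i}$ in Theorem~\ref{th:mean}.

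Next I would observe that the posterior mean $\overline{p(\sigma_{-i} \mid x)} = \int_{\Sigma_{-i}} \sigma_{-i}\, p(\sigma_{-i} \mid x)\, d\sigma_{-i}$ is well-defined and again lies in $\Sigma_{-i}$: the strategy space is a compact product of probability simplices, hence convex, so the barycenter of any distribution supported on it is itself a valid strategy, and the expectation exists by boundedness. Applying Theorem~\ref{th:mean} with $f_{-i}$ taken to be $p(\sigma_{-i} \mid x)$ then yields $u_i(\sigma_i, \overline{p(\sigma_{-i} \mid x)}) = u_i(\sigma_i, p(\sigma_{-i} \mid x))$ immediately, which is the claim.

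Since this is a corollary, there is no substantive obstacle; the only point requiring care is the measure-theoretic bookkeeping — confirming the posterior is well-defined (positive normalizing constant), that it is supported on $\Sigma_{-i}$, and that the $\Sigma_{-i}$-valued expectation makes sense — all of which is routine given the compactness and convexity of $\Sigma_{-i}$ together with the standing assumption that the prior places positive mass near the opponent's true strategy. I would therefore present the proof as a one-line reduction, with these well-posedness remarks relegated to a sentence or a footnote.
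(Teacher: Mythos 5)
Your proposal is correct and matches the paper's own argument: the paper obtains the corollary exactly by substituting the posterior $p(\sigma_{-i} \mid x)$ for $f_{-i}$ in Theorem~\ref{th:mean}, with no further steps. Your additional well-posedness remarks (positivity of the normalizing constant, the mean lying in the convex strategy space) are sensible but routine and are not part of the paper's one-line justification.
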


These results imply the meta-procedure for optimizing performance against an opponent in a two-player imperfect-information game that is depicted in Algorithm~\ref{al:meta}~\cite{Ganzfried18b:Bayesian}. In the algorithm, $p_0$ denotes the prior distribution, $r_t$ is a response function (typically a best response), and $\mathbf{M}_t$ denotes our model for the opponent's strategy at time $t.$ As described earlier, this procedure is straightforward using a Dirichlet prior distribution for perfect-information games. However, for imperfect-information games computation of the mean of the posterior distribution is much more challenging. The most effective approach is to approximate the posterior distribution by sampling a relatively small number of (mixed) strategies $k$ independently from the prior distribution in advance of gameplay and calculating the posterior probabilities only for these $k$ strategies~\cite{Southey05:Bayes}. They consider three different approaches for constructing the opponent model from these samples called Bayesian Best Response (BBR), Max A Posteriori Response, and Thompson's Response. Subsequent research has shown that BBR outperforms the other two approaches when compared against an exact (unsampled) posterior best response in a simplified setting~\cite{Ganzfried18b:Bayesian}. BBR constructs the opponent model by assuming that the opponent plays each of the $k$ sampled strategies in proportion to their posterior probabilities. 

\begin{algorithm}[!ht]
\caption{Meta-algorithm for Bayesian opponent exploitation in two-player imperfect-information games~\cite{Ganzfried18b:Bayesian}}
\label{al:meta} 
\textbf{Inputs}: Prior distribution $p_0$, response functions $r_t$ for $0 \leq t \leq T$
\begin{algorithmic}
\State $\mathbf{M}_0 \gets \overline{p_0(\boldsymbol{\sigma}_{-i})}$
\State $\mathbf{R}_0 \gets r_0(\mathbf{M}_0)$
\State Play according to $\mathbf{R}_0$
\For {$t = 1$ to $T$}
\State $x_t \gets $ observations of opponent's play at time step $t$
\State $p_t \gets \mbox{ posterior distribution of opponent's strategy} \mbox{ given prior } p_0 \mbox{ and observations } x_1,\ldots,x_t.$
\State $\mathbf{M}_t \gets$ mean of $p_t$
\State $\mathbf{R}_t \gets r_t(\mathbf{M}_t)$
\State Play according to $\mathbf{R}_t$
\EndFor
\end{algorithmic}
\end{algorithm}

\section{Consistent opponent modeling}
\label{se:consistent}
Sampled BBR~\cite{Southey05:Bayes} and its extension to multiplayer games~\cite{Ganzfried24:Opponent} have proven to be effective in experiments, significantly outperforming Nash equilibrium strategies against suboptimal opposing agents. However, we will show that this approach fails to satisfy a natural property even when playing against a static opponent drawn from the prior distribution. 

\begin{definition}
\label{de:consistent}
Suppose $\mathbf{M}_1, \mathbf{M}_2, \ldots$ are the opponent models generated by an algorithm when playing against an opponent using a static strategy $\boldsymbol{\sigma}^*_{-i}$. The algorithm is consistent if, for every possible set of sampled strategies used internally by the algorithm,
$\lim_{t\to\infty} \mathbf{M}_t = \boldsymbol{\sigma}^*_{-i}$.
\end{definition}

\begin{proposition}
\label{pr:consistent}
BBR is not consistent.
\end{proposition}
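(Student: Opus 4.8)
The plan is to exploit the defining structural feature of BBR: it commits, in advance of gameplay, to a \emph{finite} list of $k$ strategies $\sigma_1,\dots,\sigma_k$ drawn i.i.d.\ from the prior $p_0$, and at every iteration $t$ its model is by construction the posterior-weighted mixture $M_t=\sum_{j=1}^{k} w^t_j\,\sigma_j$ with $w^t_j\ge 0$ and $\sum_j w^t_j=1$. Consequently $M_t$ always lies in the fixed polytope $C:=\mathrm{conv}\{\sigma_1,\dots,\sigma_k\}$, a set that depends on the pre-sampled strategies but not on $t$ or on the observations. Hence $\lim_{t\to\infty}M_t$, if it exists, also lies in $C$, so to refute Definition~\ref{de:consistent} it is enough to exhibit a game and a static opponent $\sigma^*_{-i}\sim p_0$ for which $\sigma^*_{-i}\notin C$ with positive probability.

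First I would take the simplest nontrivial instance: a game in which the opponent has a single information set with two actions (so $\Sigma_{-i}$ is a one-dimensional simplex), its action is observed at the terminal node so that the posterior weights $w^t_j$ genuinely move, and $p_0$ is the uninformative $\mathrm{Dirichlet}(\alpha,\alpha)$ prior from the introduction, which is atomless with full support. Writing each strategy as the probability it assigns to the first action, $C=[\min_j\sigma_j,\max_j\sigma_j]$, and by exchangeability of the $k+1$ i.i.d.\ draws $\sigma^*_{-i},\sigma_1,\dots,\sigma_k$ the event $\{\sigma^*_{-i}>\max_j\sigma_j\}$ has probability $\tfrac{1}{k+1}>0$. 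On that event $M_t\le\max_j\sigma_j<\sigma^*_{-i}$ for all $t$, so $\limsup_{t\to\infty}M_t<\sigma^*_{-i}$ and the limit cannot equal $\sigma^*_{-i}$; thus $\Pr[\lim_t M_t\ne\sigma^*_{-i}]>0$, contradicting consistency. The same reasoning applies for every finite $k$ and on any strategy space of dimension at least one, since the convex hull of finitely many points drawn from an atomless full-support prior almost surely omits a neighborhood of each vertex of $\Sigma_{-i}$, so the prior mass of $\Sigma_{-i}\setminus C$ is positive and $\Pr[\sigma^*_{-i}\notin C]>0$.

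A complementary, sharper version would also pin down the limit: against a static opponent a law-of-large-numbers argument gives $w^t_j\to\mathbf{1}[j=j^\star]$, where $j^\star$ indexes the sampled strategy that best explains the observations generated by $\sigma^*_{-i}$, so $M_t$ in fact \emph{converges}, but to the single strategy $\sigma_{j^\star}$, which differs from $\sigma^*_{-i}$ with probability $1$ because $p_0$ is atomless. I expect the only delicate point to lie on this second route: since our own response $R_t$ can change from iteration to iteration, the per-iteration observation distributions are not identically distributed, so the law-of-large-numbers step must be stated for a triangular array (or martingale) rather than an i.i.d.\ sequence, or else side-stepped by choosing a game in which our action does not influence what we observe about the opponent. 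The convex-hull argument of the previous paragraph avoids this altogether and is what I would present as the main proof, using the sharper statement only as a remark.
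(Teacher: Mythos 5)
Your proof is correct and rests on exactly the same obstruction as the paper's: BBR's model is by construction confined to the convex hull of the $k$ pre-sampled strategies, so any $\sigma^*_{-i}$ outside that hull can never be approached. The paper simply instantiates this with a concrete Rock--Paper--Scissors example ($\sigma^*_{-i}=(0.8,0.1,0.1)$ versus three explicit samples), whereas you add an exchangeability argument showing the failure occurs with probability at least $\tfrac{1}{k+1}$ over the prior draw; your ``sharper version'' remark is essentially the paper's separate Proposition~\ref{pr:consistent2}.
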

\begin{proof}
If we can show that BBR is not consistent in a normal-form game, this directly implies that it is also not consistent in extensive-form games of perfect and imperfect information.
Consider the game of Rock-Paper-Scissors and suppose that $\boldsymbol{\sigma}^*_{-i} = (0.8,0.1,0.1)$ for (R,P,S). Suppose that we are using BBR with $k = 3$ sampled strategies: (0.5,0.3,0.2), (0.3,0.5,0.2), (0.2,0.3,0.5). BBR chooses the opponent model to play each of the sampled strategies in proportion to their posterior value; therefore, the opponent model is always a convex combination of the sampled strategies.
Since $\boldsymbol{\sigma}^*_{-i}$ is clearly not close to any convex combination of the sampled strategies, clearly $\mathbf{M}_t$ cannot approach $\boldsymbol{\sigma}^*_{-i}$. 
\end{proof}

As the proof of Proposition~\ref{pr:consistent} shows, clearly BBR is constrained to only selecting opponent models that are convex combinations of the sampled strategies. So if the true strategy $\boldsymbol{\sigma}^*_{-i}$ is outside the convex hull of the sampled strategies, BBR has no hope of the opponent model approaching the true strategy even if the game is repeated infinitely many times. Of course, in a normal-form game such as Rock-Paper-Scissors we could include all pure strategies of the game in our set of ``samples,'' which would ensure that every mixed strategy is within the convex hull of the samples. However, in extensive-form games there are exponentially many pure strategies in the size of the game tree and it is not feasible to include all of them. In fact, Proposition~\ref{pr:consistent2} shows that BBR may not be consistent even if $\boldsymbol{\sigma}^*_{-i}$ is in the convex hull of the sampled strategies.

\begin{proposition}
\label{pr:consistent2}
BBR is not consistent even if $\boldsymbol{\sigma}^*_{-i}$ is in the convex hull of the sampled strategies.
\end{proposition}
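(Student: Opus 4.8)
The plan is to reuse the Rock--Paper--Scissors setting from the proof of Proposition~\ref{pr:consistent}, but now to choose the $k$ sampled strategies so that $\sigma^*_{-i}$ genuinely lies in their convex hull, and then to argue that BBR's opponent model still converges to the \emph{wrong} point of that hull. The conceptual reason is that a Bayesian posterior supported on a finite set of hypotheses does not converge to an arbitrary mixture of those hypotheses: as data accumulates it concentrates all of its mass on the single hypothesis that best explains the data. Hence $M_t$, being the posterior mean over the samples, converges to one particular sampled strategy rather than to the mixture of samples that happens to equal $\sigma^*_{-i}$.

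Concretely, I would carry out the following steps. (1) Write the posterior after $t$ iterations explicitly: with observed opponent actions $x_1,\dots,x_t$ drawn i.i.d.\ from $\sigma^*_{-i}$, $p_t(\sigma_j) \propto p_0(\sigma_j)\prod_{s=1}^{t}\sigma_j(x_s)$ and $M_t = \sum_{j=1}^{k} p_t(\sigma_j)\,\sigma_j$. (2) Take logarithms and divide by $t$: $\frac1t\log\!\big(p_0(\sigma_j)\prod_{s}\sigma_j(x_s)\big) = \frac1t\log p_0(\sigma_j) + \frac1t\sum_{s}\log\sigma_j(x_s)$. (3) Apply the strong law of large numbers to the second term, which converges almost surely to $\mathbb{E}_{a\sim\sigma^*_{-i}}[\log\sigma_j(a)] = -H(\sigma^*_{-i}) - D_{\mathrm{KL}}(\sigma^*_{-i}\,\|\,\sigma_j)$. (4) Conclude that, almost surely, $p_t(\sigma_j)/p_t(\sigma_{j^\star}) \to 0$ for every $j$ other than the unique minimizer $j^\star = \argmin_j D_{\mathrm{KL}}(\sigma^*_{-i}\,\|\,\sigma_j)$, so $p_t$ converges to the point mass on $\sigma_{j^\star}$ and therefore $M_t \to \sigma_{j^\star}$. (5) Finally, exhibit samples for which $j^\star$ is unique and $\sigma_{j^\star}\neq\sigma^*_{-i}$ even though $\sigma^*_{-i}$ is in the convex hull: e.g.\ $\sigma^*_{-i} = (0.4,0.3,0.3)$ with samples $(0.5,0.3,0.2)$ and $(0.3,0.3,0.4)$, whose midpoint is exactly $\sigma^*_{-i}$, together with a third, far-away sample; a one-line computation of the two KL divergences shows one of the first two is strictly closer, so $M_t$ converges to an endpoint of the segment rather than to its midpoint. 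Since this configuration of $\sigma^*_{-i}$ and samples occurs with positive probability under any full-support prior, the inconsistency is not an artifact of an adversarial choice.

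The main obstacle I expect is making step~(4) rigorous rather than heuristic: the law of large numbers gives almost-sure convergence of the time-averaged log-likelihoods, and I need to turn this into the statement that the posterior argmax is eventually constant and that the non-dominant weights decay to $0$ fast enough that the finite convex combination $M_t$ actually converges to $\sigma_{j^\star}$. This becomes routine once the unique-minimizer (no-tie) condition is imposed --- which is exactly why the numerical choice of samples in step~(5) matters: a symmetric choice with $D_{\mathrm{KL}}(\sigma^*_{-i}\|\sigma_1) = D_{\mathrm{KL}}(\sigma^*_{-i}\|\sigma_2)$ is a measure-zero degenerate case that could accidentally make $M_t \to \sigma^*_{-i}$, so the example must avoid it. A secondary point to check is that $\sigma^*_{-i}$ and all samples have full support on the actions $\sigma^*_{-i}$ plays with positive probability, so that every KL divergence is finite and the argument goes through; this holds for the strategies chosen above.
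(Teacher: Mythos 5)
Your proposal is correct and follows essentially the same route as the paper: observe via the law of large numbers that the posterior over the finitely many samples concentrates on the single sample best fitting the data (the paper computes the products $\prod_i s_{j,i}^{t/3}$ directly, which is exactly your KL-divergence minimization in disguise), and exhibit samples whose convex hull contains $\sigma^*_{-i}$ while the dominant sample differs from it. Your version is somewhat more rigorous (explicit SLLN on log-likelihoods and an explicit no-tie condition), but the argument and the style of counterexample are the same.
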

\begin{proof}
Again consider Rock-Paper-Scissors. Suppose that $\boldsymbol{\sigma}^*_{-i} = \left(\frac{1}{3},\frac{1}{3},\frac{1}{3}\right),$ and that there are 3 sampled strategies: $\mathbf{s}_1 = (0.2,0.4,0.4),$ $\mathbf{s}_2 = (0.6,0.3,0.1),$ $\mathbf{s}_3 = (0.2,0.3,0.5).$ Then $\boldsymbol{\sigma}^*_{-i} = \frac{1}{3} \mathbf{s}_1 + \frac{1}{3} \mathbf{s}_2 + \frac{1}{3} \mathbf{s}_3,$ and is in the convex hull of the samples. As $t \to \infty$, the empirical frequencies of observations converge to those induced by $\boldsymbol{\sigma}^*_{-i}$ by the law of large numbers. Consequently, asymptotically the posterior weight assigned to a sampled strategy $(x_1,x_2,x_3)$ is proportional to $x_1^{t/3}x_2^{t/3}x_3^{t/3}$. Denote this distribution by $q.$

$$q(x_1,x_2,x_3) \propto x_1^{\frac{t}{3}}x_2^{\frac{t}{3}}x_3^{\frac{t}{3}}$$
$$q(0.2,0.4,0.4) \propto 0.2^{\frac{t}{3}}0.4^{\frac{t}{3}}0.4^{\frac{t}{3}} = 0.032^{\frac{t}{3}}$$
$$q(0.6,0.3,0.1) \propto 0.6^{\frac{t}{3}}0.3^{\frac{t}{3}}0.1^{\frac{t}{3}} = 0.018^{\frac{t}{3}}$$
$$q(0.2,0.3,0.5) \propto 0.2^{\frac{t}{3}}0.3^{\frac{t}{3}}0.5^{\frac{t}{3}} = 0.03^{\frac{t}{3}}$$
So BBR will select $s_1$ with probability 
$$\frac{0.032^{\frac{t}{3}}}{0.032^{\frac{t}{3}} + 0.018^{\frac{t}{3}} + 0.03^{\frac{t}{3}}}
= \frac{1}{1 + \left(\frac{0.018}{0.032}\right)^{\frac{t}{3}} + \left(\frac{0.03}{0.032}\right)^{\frac{t}{3}}}$$

$$\lim_{t \rightarrow \infty} \frac{1}{1 + \left(\frac{0.018}{0.032}\right)^{\frac{t}{3}} + \left(\frac{0.03}{0.032}\right)^{\frac{t}{3}}}
= \frac{1}{1 + 0 + 0} = 1.$$

So as $t$ goes to infinity, BBR will select $\mathbf{s}_1$ with probability 1 and $\mathbf{s}_2,\mathbf{s}_3$ with probability 0. Since $s_1$ differs from $\boldsymbol{\sigma}^*_{-i},$ the algorithm is not consistent.
\end{proof}

To summarize, we have shown that BBR is not consistent even if the opponent is playing a static strategy drawn from the known prior distribution that is in the convex hull of the sampled strategies.
This is true even if the game has two players and is small. As the proof of Proposition~\ref{pr:consistent2} shows, unless the limiting posterior values are precisely equal for multiple sampled strategies, BBR will ultimately choose the opponent model to be one of the sampled strategies with probability 1. When the number of samples is small compared to the size of the game, which is the case in practice, we would need to get very lucky with our sampling for one of the samples to be very close to $\boldsymbol{\sigma}^*_{-i}.$ As the number of samples goes to infinity we would be able to guarantee that a sampled strategy is arbitrarily close to $\boldsymbol{\sigma}^*_{-i}$, assuming that the prior distribution has full support (which is the case for the Dirichlet distribution assuming all initial parameters are strictly positive). However, BBR is only practical when the number of samples is relatively small.

\section{Algorithm}
\label{se:algorithm}
We have shown that the BBR approach, despite its observed experimental success, has significant theoretical limitations. In order to present our new approach we review the \emph{sequence-form representation} for two-player extensive-form games of imperfect information~\cite{Koller94:Fast}. Rather than operate on the full normal-form pure strategy space, which has size exponential in the size of the game tree, the sequence-form works with sequences of actions along trajectories from the root node to leaf nodes. For player 1, the matrix $\mathbf{E}$ is defined where each row corresponds to an information set (including an initial row for the ``empty'' information set), and each column corresponds to an action sequence (including an initial column for the ``empty'' action sequence). In the first row of $\mathbf{E}$ the first element is 1 and all other elements are 0; subsequent rows have -1 for the entries corresponding to the action sequence leading to the root of the information set, and 1 for all actions that can be taken at the information set (and 0 otherwise). Thus $\mathbf{E}$ has dimension $(c_1 + 1) \times (d_1+1)$, where $c_i$ is the number of information sets for player $i$ and $d_i$ is the number of action sequences for player $i$. Matrix $\mathbf{F}$ is defined analogously for player 2. The vector $\mathbf{e}$ is defined to be a column vector of length $c_1+1$ with 1 in the first position and 0 in other entries, and vector $\mathbf{f}$ is defined with length $c_2+1$ analogously. The matrix $\mathbf{A}$ is defined with dimension $(d_1+1) \times (d_2+1)$ where entry $A_{ij}$ gives the payoff for player 1 when player 1 plays action sequence $i$ and player 2 plays action sequence $j$ multiplied by the probabilities of chance moves along the path of play. The matrix $\mathbf{B}$ of player 2's payoffs is defined analogously. In zero-sum games $\mathbf{B} = -\mathbf{A}.$

Given these matrices we can solve one of two linear programming problems to compute a Nash equilibrium in zero-sum extensive-form games~\cite{Koller94:Fast}. In the first formulation the primal variables $\mathbf{x}$ correspond to player 1's mixed strategy while the dual variables correspond to player 2's strategy. In the second formulation, which is the dual problem of the first formulation, the primal decision variables $\mathbf{y}$ correspond to player 2's strategy while the dual variables correspond to player 1's strategy.

\[
\begin{array}{rrl} 
&\max_{\mathbf{x},\mathbf{q}}& -\mathbf{q}^T \mathbf{f} \\ 
&\mbox{s.t.}& \mathbf{x}^T (-\mathbf{A}) - \mathbf{q}^T \mathbf{F} \leq \mathbf{0} \\
& & \mathbf{x}^T \mathbf{E}^T = \mathbf{e}^T \\
& & \mathbf{x} \geq \mathbf{0}\\
\end{array} 
\]

\[
\begin{array}{rrl} 
&\min_{\mathbf{y},\mathbf{p}}& \mathbf{e}^T \mathbf{p} \\ 
&\mbox{s.t.}& -\mathbf{A} \mathbf{y} + \mathbf{E}^T \mathbf{p} \geq \mathbf{0} \\
& & -\mathbf{F} \mathbf{y} = -\mathbf{f} \\
& & \mathbf{y} \geq \mathbf{0}\\
\end{array} 
\]

Let $Q_{-i}$ denote the set of opponent action sequences. For each sequence $r \in Q_{-i}$, let $y_r$ denote the realization probability assigned to sequence $r$. Then the prior distribution is proportional to
$$\prod_{r \in Q_{-i}} y_r^{\alpha_r - 1}$$

Suppose that at time $t$ we arrive at leaf node $\ell^t.$ The set of trajectories consistent with player $i$'s observations is then $o_i(\ell^t)$.
The probability of observing a compatible trajectory is proportional to the product of the realization probability assigned by the opponent strategy and the probability of the associated chance events.
The likelihood function is 
$$\frac{\sum_{r \in o_i(\ell^t)} p_r y_r}{\sum_{r \in o_i(\ell^t)} p_r},$$
where $p_r$ is the product of the chance probabilities along sequence $r.$
To simplify presentation we define $q_r$ as the normalized value:
$$q_r = \frac{p_r}{\sum_{r' \in o_i(\ell^t)} p_{r'}}.$$
Then the likelihood function is 
$$\sum_{r \in o_i(\ell^t)} q_r y_r.$$
So the posterior distribution is proportional to:

$$\left(\prod_{r \in Q_{-i}} y^{\alpha_r - 1}_r\right) \left(\prod_t \left(\sum_{r \in o_i(\ell^t)} q_r y_r\right)\right).$$
The log of the posterior is equal to the following (plus a constant):
$$\sum_r (\alpha_r -1) \log(y_r) + \sum_t \log \left(\sum_{r \in o_i(\ell^t)} q_r y_r\right)$$

The problem of finding the realization probabilities $y_r$ that maximize the posterior distribution can therefore be formulated as:

\begin{equation}\label{eq:opt}
\begin{array}{rrl} 
&\max_{\mathbf{y}} & \sum_{r \in Q_{-i}} (\alpha_r -1) \log(y_r) + \sum_t \log \left(\sum_{r \in o_i(\ell^t)} q_r y_r\right)\\ 
&\mbox{s.t.}& \mathbf{F} \mathbf{y} = \mathbf{f} \\
& & \mathbf{y} \geq \mathbf{0}\\
\end{array} 
\end{equation}

\begin{proposition}
\label{pr:concave}
If $\alpha_r \geq 1$ for all $r$, then the optimization problem (\ref{eq:opt}) is a concave maximization problem.
\end{proposition}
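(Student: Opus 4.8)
The plan is to verify the two defining ingredients of a concave maximization problem separately: that the feasible region is convex, and that the objective is concave on it. The feasible region is immediate — it is cut out by the linear equality $\mathbf{F}\mathbf{y} = \mathbf{f}$ together with the linear inequalities $\mathbf{y} \geq \mathbf{0}$, hence is an intersection of hyperplanes and half-spaces, which is convex. No work beyond this observation is needed for that part.

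The substance is showing concavity of the objective $g(\mathbf{y}) = \sum_i (\alpha_i - 1)\log(y_i) + \sum_t \log\left(\sum_{y_j \in o_1(\ell^t)} q_j y_j\right)$, and I would treat the two summations in turn. For the first, recall that $y \mapsto \log(y)$ is concave on $(0,\infty)$; under the hypothesis $\alpha_i \geq 1$ the coefficient $\alpha_i - 1$ is nonnegative, so each term $(\alpha_i - 1)\log(y_i)$ is a nonnegative multiple of a concave function and therefore concave, and a finite sum of concave functions is concave. This is precisely — and the only place — where the hypothesis is used. For the second summation, for each fixed $t$ the inner expression $\sum_{y_j \in o_1(\ell^t)} q_j y_j$ is an affine (in fact linear, with the $q_j \geq 0$) function of $\mathbf{y}$, so $\log\left(\sum_j q_j y_j\right)$ is the composition of the concave function $\log$ with an affine map and is thus concave; summing over $t$ again preserves concavity. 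Adding the two concave pieces shows $g$ is concave, so the problem is the maximization of a concave function over a convex set, which is the claim.

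The one point that calls for a remark rather than a computation is the boundary of the feasible polyhedron, where some coordinate $y_i$ or some inner sum $\sum_j q_j y_j$ may vanish and the corresponding logarithm diverges to $-\infty$. Since every such term appears with a nonnegative coefficient, the natural extension of $g$ to the closed polyhedron takes value $-\infty$ on those faces, which is consistent with concavity as an extended-real-valued function; equivalently, one may simply restrict attention to the relative interior of the feasible set, on which $g$ is finite and concave in the usual sense. I expect no genuine obstacle here — the only thing to be careful to flag is the sign condition $\alpha_i - 1 \geq 0$ as the essential use of the hypothesis, since a negative coefficient on $\log(y_i)$ would render that term convex and break the argument.
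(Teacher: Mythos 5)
Your proposal is correct and follows essentially the same route as the paper's proof: affine equality constraints and linear inequalities give a convex feasible set, the hypothesis $\alpha_i \geq 1$ makes each $(\alpha_i - 1)\log(y_i)$ a nonnegative multiple of a concave function, and each $\log$ of an affine inner sum is concave by composition, so the objective is a sum of concave terms. Your added remark on the boundary behavior of the logarithms is a reasonable extra precaution not present in the paper, but it does not change the argument.
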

\begin{proof}
The general form for a concave maximization problem is the following:
\begin{equation}
\begin{array}{rrl} 
&\max_{\mathbf{x}} & f_0(x)\\ 
&\mbox{s.t.}& f_i(\mathbf{x}) \leq 0 \mbox{ for } i = 1,\ldots,m\\
& & a^T_i x = b_i \mbox{ for all } i= 1\ldots,m \\
\end{array} 
\end{equation}
where $f_0$ is concave, $f_1,\ldots,f_m$ are convex, and the equality constraints are affine.
First note that the equality constraints in our formulation are $\mathbf{F} \mathbf{y} = \mathbf{f},$
which are affine. The inequality constraints can be rewritten as $-y_r \leq 0.$ Since $-y_r$ is an affine 
function of $y_r,$ it is convex. So all inequality constraints are convex. 

Since the logarithm function is concave, $\log(y_r)$ is concave for each $r.$ Since $\alpha_r - 1 \geq 0,$
$(\alpha_r -1) \log(y_r)$ is a nonnegative multiple of a concave function and is therefore concave.
Therefore, $\sum_r (\alpha_r -1) \log(y_r)$ is concave, since it is the sum of concave functions.

For the second objective term, $\sum_{r \in o_i(\ell^t)} q_r y_r$ is affine. 
Since log is concave, by the composition rule $\log \left(\sum_{r \in o_i(\ell^t)} q_r y_r\right)$ is concave for each $t.$
So $\sum_t \log \left(\sum_{r \in o_i(\ell^t)} q_r y_r\right)$ is concave.
Therefore, the objective is the sum of two concave functions and is concave. 
\end{proof}

Note that we can equivalently characterize the problem as a convex minimization problem by negating the objective.
Such problems have the property that any local minimum is also a global minimum, and standard projected gradient 
descent methods converge to the optimal solution under appropriate step-size conditions. Since our formulation has 
constraints, we must apply projected gradient descent to ensure that the solutions maintain feasibility.

Let $f(\mathbf{y})$ denote the minimization formulation objective.
The gradient of $f$ has $r$th component:
$$\frac{1-\alpha_r}{y_r} - \sum_{t : r \in o_i(\ell^t)} \frac{q_r}{\sum_{r' \in o_i(\ell^t)} q_{r'} y_{r'}}$$

The gradient step of our algorithm to obtain the value $\mathbf{y}^{(k+1)}$ is:
$$\mathbf{z}^{(k)} = \mathbf{y}^{(k)} - \eta_k \nabla f(\mathbf{y}^{(k)})$$
The projection step is then:
$$\mathbf{y}^{(k+1)} = \argmin_{\mathbf{y} : \mathbf{F} \mathbf{y} = \mathbf{f}, \mathbf{y} \geq \mathbf{0}} \|\mathbf{y} - \mathbf{z}^{(k)} \|^2_2$$
Each projection step therefore involves solving a convex quadratic program, which can be done in polynomial time.

\begin{proposition}[Consistency under Persistent Excitation]
\label{pr:consistency}
Let $\mathcal{M}$ denote the set of feasible opponent strategies in sequence form.
Assume:
(i) the true opponent strategy $\boldsymbol{\sigma}^*_{-i}$ lies in the support of
the prior, meaning that every neighborhood of $\boldsymbol{\sigma}^*_{-i}$ relative
to $\mathcal M$ has positive prior probability;
(ii) the induced observation model is identifiable, i.e., distinct strategies in $\mathcal{M}$ induce distinct distributions over observed play;
and (iii) \emph{(persistent excitation)} every opponent information set is visited infinitely often with nonzero frequency.

Then the posterior distribution concentrates at $\boldsymbol{\sigma}^*_{-i}$, and any sequence of MAP estimates selected from the concentrating posterior converges almost surely to $\boldsymbol{\sigma}^*_{-i}$.
\end{proposition}

\begin{proof}
Let $\ell_t(\boldsymbol{\sigma})$ denote the log-likelihood of the observation at iteration $t$ under opponent strategy $\boldsymbol{\sigma}$.
By Bayes' rule,
\[
\log p(\boldsymbol{\sigma} \mid \mathcal{D}_t)
= \log p(\boldsymbol{\sigma}) + \sum_{s=1}^t \ell_s(\boldsymbol{\sigma}).
\]

Under assumption (iii), each component of the opponent's strategy is observed infinitely often, ensuring convergence of empirical log-likelihood averages to their expectations.
By identifiability (assumption (ii)), the expected log-likelihood is uniquely maximized at the true strategy $\boldsymbol{\sigma}^*_{-i}$.
Hence, for any $\boldsymbol{\sigma} \neq \boldsymbol{\sigma}^*_{-i}$, the cumulative log-likelihood difference
$\sum_{s=1}^t [\ell_s(\boldsymbol{\sigma}) - \ell_s(\boldsymbol{\sigma}^*_{-i})]$ diverges to $-\infty$.

Since $\boldsymbol{\sigma}^*_{-i}$ lies in the support of the prior, every
neighborhood of $\boldsymbol{\sigma}^*_{-i}$ relative to $\mathcal M$ has positive
prior probability. Therefore, the posterior mass outside any such
neighborhood converges to zero; so the posterior concentrates at $\boldsymbol{\sigma}^*_{-i}$, and any sequence of MAP estimators converges to $\boldsymbol{\sigma}^*_{-i}$.
\end{proof}

The assumptions underlying Proposition~\ref{pr:consistency} are standard in the literature on Bayesian consistency and learning under partial observability. In particular, the requirement that the true opponent strategy lies in the support of the prior, together with identifiability of the induced observation process, corresponds to classical sufficient conditions for posterior consistency \cite{Schwartz65:Bayes,Barron99:Consistency,Ghosal07:Convergence}. Identifiability ensures that distinct opponent strategies induce distinguishable distributions over observed play, so that the true strategy can in principle be recovered from infinite data. The persistent excitation assumption plays a role analogous to excitation conditions in adaptive control, ensuring that informative observations occur infinitely often and preventing convergence to uninformative regions of the strategy space \cite{Hall80:Martingale,Bittanti96:Adaptive}. Together, these conditions ensure posterior concentration around the true opponent strategy even when observations are generated endogenously.

Note that this algorithm is consistent, but is not \emph{optimal} in the sense that it does not maximize our payoff against the opponent's full posterior distribution at each iteration; by Corollary~\ref{co:posterior-mean} this is obtained by playing a best response to the mean of the posterior distribution for the opponent's strategy, not the mode. However, computing the mean of the posterior for this setting even with Dirichlet prior distributions is not tractable. Note that if we were able to compute the mean of the posterior distribution, this would be preferable to our approach and would also obtain consistency. As we showed in Proposition~\ref{pr:consistent}, attempts to approximate the posterior using sampling result in an inconsistent approach. Note that the zero-sum property (as well as the payoff matrices $\mathbf{A},\mathbf{B}$) are not used in the opponent modeling algorithm (though the payoff matrices are used in the computation of the best response), and the algorithm also applies to games that are not zero sum.

\section{Experiments}
\label{se:experiments}
We evaluate our approach on the game of Kuhn poker~\cite{Kuhn50:Simplified}, with game tree shown in Figure~\ref{fi:kuhn-poker}. In Kuhn poker, there is a three-card deck consisting of a King (K), Queen (Q), and Jack (J). Initially each of two players places an ante of \$1 (so there is an initial \emph{pot} of \$2). Each player is dealt one of the cards privately (while the third card is not dealt to any player). Player 1 is allowed to bet \$1 or to \emph{check}. If P1 bets, then P2 is allowed to \emph{call} and match the bet, or to \emph{fold} and forfeit the hand. If player 2 matches the bet, then the player with the winning hand wins the full pot of \$4. If player 2 folds, then player 1 wins the pot (of \$2). If player 1 chose to check in the first round, then player 2 is allowed to bet \$1 or to check. If player 2 also checks, then the player with the highest card wins the pot of \$2. If player 2 bets when facing a check, then player 1 can call or fold. We define the observability function $o_i$ using the standard rules of poker: if a player folds, then neither player observes the other player's card. If the hand finishes without any player folding (i.e., the hand goes to \emph{showdown}), then both players observe the opponent's card.\footnote{Note that in live poker if a hand goes to showdown the player ``out of position'' is often supposed to show their hand first, at which point the other player has the opportunity to ``muck'' their losing hand without having to show it. Also the first player may choose to muck their hand rather than showing it. In online poker typically both hands are always observed when neither player folds. We will follow the online convention, since the live poker rules are not always consistent on this issue.} Both players have 12 action sequences (plus the empty sequence) and 6 information sets. The sequence-form matrices $\mathbf{E}$ and $\mathbf{F}$ therefore have dimension $7 \times 13$, and are defined in Tables~\ref{ta:E} and~\ref{ta:F} (with zero entries omitted). The vectors $\mathbf{e}$ and $\mathbf{f}$ are both equal to $(1,0,0,0,0,0,0)^T.$ Matrix $\mathbf{A}$ has dimension $13 \times 13$ and is defined in Table~\ref{ta:A} (with zeroes omitted). Since the game is zero sum, the payoff matrix for player 2 is $\mathbf{B} = -\mathbf{A}.$ The observability functions $o_1$ and $o_2$ are given in Table~\ref{ta:obs}. The full game tree is given in Figure~\ref{fi:kuhn-poker}.

\begin{figure*}[!ht]
\centering
\includegraphics[width=\textwidth,height=\textheight,keepaspectratio]{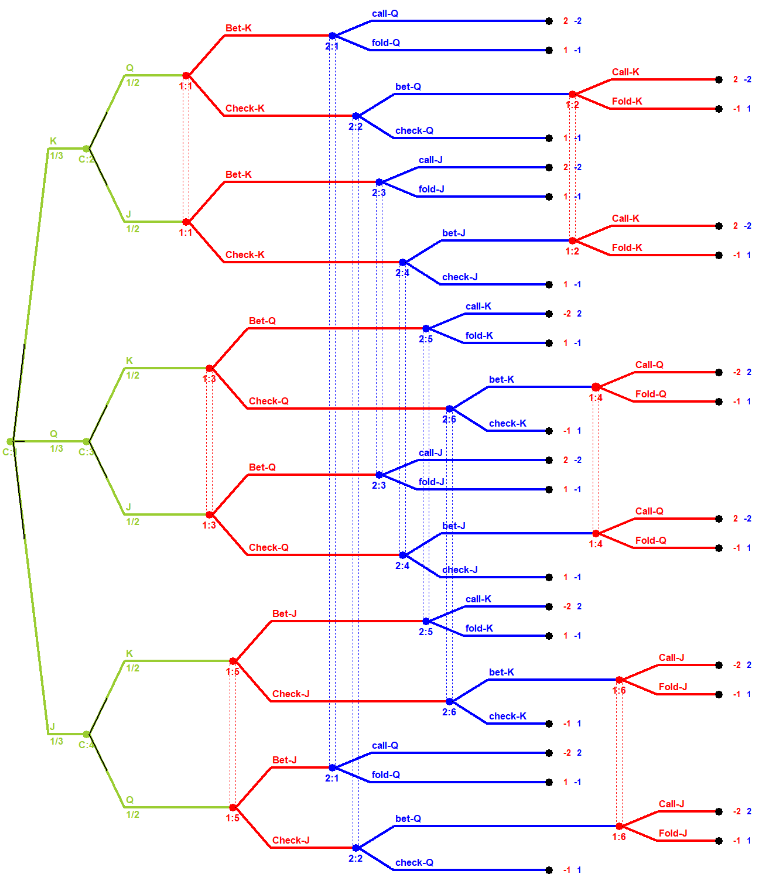}
\caption{Game tree for Kuhn poker.}
\label{fi:kuhn-poker}
\end{figure*}

\renewcommand{\tabcolsep}{4pt}
\begin{table*}[!ht]
\centering
\caption{Matrix $\mathbf{E}$ for Kuhn poker.}
\label{ta:E}
\begin{tabular}{|*{13}{c}|}
$\emptyset$ &$B_K$ &$Ch_K$ &$Ch_KCa_K$ &$Ch_KF_K$ &$B_Q$ &$Ch_Q$ &$Ch_QCa_Q$ &$Ch_QF_Q$ &$B_J$ &$Ch_J$ &$Ch_JCa_J$ &$Ch_JF_J$ \\ \hline
1 & & & & & & & & & & & & \\
-1 &1 &1 & & & & & & & & & & \\
& &-1 &1 &1 & & & & & & & & \\
-1 & & & & &1 &1 & & & & & & \\
& & & & & &-1 &1 &1 & & & & \\
-1 & & & & & & & & &1 &1 & & \\
& & & & & & & & & &-1 &1 &1 \\ \hline
\end{tabular}
\end{table*}

\renewcommand{\tabcolsep}{6pt}
\begin{table*}[!ht]
\centering
\caption{Matrix $\mathbf{F}$ for Kuhn poker.}
\label{ta:F}
\begin{tabular}{|*{13}{c}|}
$\emptyset$ &$ca_Q$ &$f_Q$ &$b_Q$ &$ch_Q$ &$ca_J$ &$f_J$ &$b_J$ &$ch_J$ &$ca_K$ &$f_K$ &$b_K$ &$ch_K$ \\ \hline
1 & & & & & & & & & & & & \\
-1 &1 &1 & & & & & & & & & & \\
-1& & &1 &1 & & & & & & & & \\
-1& & & & &1 &1 & & & & & & \\
-1& & & & & & &1 &1 & & & & \\
-1& & & & & & & & &1 &1 & & \\
-1& & & & & & & & & & &1 &1 \\\hline
\end{tabular}
\end{table*}

\begin{table*}[!ht]
\centering
\caption{Matrix $\mathbf{A}$ for Kuhn poker.}
\label{ta:A}
\begin{tabular}{c|*{13}{c}|}
 &$\emptyset$ &$ca_Q$ &$f_Q$ &$b_Q$ &$ch_Q$ &$ca_J$ &$f_J$ &$b_J$ &$ch_J$ &$ca_K$ &$f_K$ &$b_K$ &$ch_K$ \\ \hline
$\emptyset$ & & & & & & & & & & & & & \\ 
$B_K$ & &$\frac{1}{3}$ &$\frac{1}{6}$ & & &$\frac{1}{3}$ &$\frac{1}{6}$ & & & & & & \\ 
$Ch_K$ & & & & &$\frac{1}{6}$ & & & &$\frac{1}{6}$ & & & & \\ 
$Ch_KCa_K$ & & & &$\frac{1}{3}$ & & & &$\frac{1}{3}$ & & & & & \\ 
$Ch_KF_K$ & & & &$-\frac{1}{6}$ & & & &$-\frac{1}{6}$ & & & & & \\ 
$B_Q$ & & & & & &$\frac{1}{3}$ &$\frac{1}{6}$ & & &$-\frac{1}{3}$ &$\frac{1}{6}$ & & \\ 
$Ch_Q$ & & & & & & & & &$\frac{1}{6}$ & & & &$-\frac{1}{6}$ \\ 
$Ch_QCa_Q$ & & & & & & & &$\frac{1}{3}$ & & & &$-\frac{1}{3}$ & \\ 
$Ch_QF_Q$ & & & & & & & &$-\frac{1}{6}$ & & & &$-\frac{1}{6}$ & \\ 
$B_J$ & &$-\frac{1}{3}$ &$\frac{1}{6}$ & & & & & & &$-\frac{1}{3}$ &$\frac{1}{6}$ & & \\ 
$Ch_J$ & & & & &$-\frac{1}{6}$ & & & & & & & &$-\frac{1}{6}$ \\ 
$Ch_JCa_J$ & & & &$-\frac{1}{3}$ & & & & & & & &$-\frac{1}{3}$ & \\ 
$Ch_JF_J$ & & & &$-\frac{1}{6}$ & & & & & & & &$-\frac{1}{6}$ & \\ \hline
\end{tabular}
\end{table*}

\begin{table*}[!ht]
\centering
\caption{Observability functions $o_1$ and $o_2$.}
\label{ta:obs}
\begin{tabular}{c|c|c}
Leaf node $\ell$ &$o_1(\ell)$ &$o_2(\ell)$ \\ \hline \hline
$B_Kca_Q$ & $\{B_Kca_Q\}$ &$\{B_Kca_Q\}$\\ \hline
$B_Kf_Q$ & $\{B_Kf_Q, B_Kf_J\}$ &$\{B_Kf_Q, B_Jf_Q\}$\\ \hline
$Ch_Kb_QCa_K$ &$\{Ch_Kb_QCa_K\}$ &$\{Ch_Kb_QCa_K\}$ \\ \hline
$Ch_Kb_QF_K$ &$\{Ch_Kb_QF_K, Ch_Kb_JF_K\}$ &$\{Ch_Kb_QF_K, Ch_Jb_QF_J\}$ \\ \hline
$Ch_Kch_Q$ &$\{Ch_Kch_Q\}$ &$\{Ch_Kch_Q\}$ \\ \hline
$B_Kca_J$ &$\{B_Kca_J\}$ &$\{B_Kca_J\}$ \\ \hline
$B_Kf_J$ &$\{B_Kf_J, B_Kf_Q\}$ &$\{B_Kf_J, B_Qf_J\}$ \\ \hline
$Ch_Kb_JCa_K$ &$\{Ch_Kb_JCa_K\}$ &$\{Ch_Kb_JCa_K\}$ \\ \hline
$Ch_Kb_JF_K$ &$\{Ch_Kb_JF_K, Ch_Kb_QF_K\}$ &$\{Ch_Kb_JF_K, Ch_Qb_JF_Q\}$ \\ \hline
$Ch_Kch_J$ &$\{Ch_Kch_J\}$ &$\{Ch_Kch_J\}$ \\ \hline
$B_Qca_K$ & $\{B_Qca_K\}$ &$\{B_Qca_K\}$\\ \hline
$B_Qf_K$ & $\{B_Qf_K, B_Qf_J\}$ &$\{B_Qf_K, B_Jf_K\}$\\ \hline
$Ch_Qb_KCa_Q$ &$\{Ch_Qb_KCa_Q\}$ &$\{Ch_Qb_KCa_Q\}$ \\ \hline
$Ch_Qb_KF_Q$ &$\{Ch_Qb_KF_Q, Ch_Qb_JF_Q\}$ &$\{Ch_Qb_KF_Q, Ch_Jb_KF_J\}$ \\ \hline
$Ch_Qch_K$ &$\{Ch_Qch_K\}$ &$\{Ch_Qch_K\}$ \\ \hline
$B_Qca_J$ &$\{B_Qca_J\}$ &$\{B_Qca_J\}$ \\ \hline
$B_Qf_J$ &$\{B_Qf_J, B_Qf_K\}$ &$\{B_Qf_J, B_Kf_J\}$ \\ \hline
$Ch_Qb_JCa_Q$ &$\{Ch_Qb_JCa_Q\}$ &$\{Ch_Qb_JCa_Q\}$ \\ \hline
$Ch_Qb_JF_Q$ &$\{Ch_Qb_JF_Q, Ch_Qb_KF_Q\}$ &$\{Ch_Qb_JF_Q, Ch_Kb_JF_K\}$ \\ \hline
$Ch_Qch_J$ &$\{Ch_Qch_J\}$ &$\{Ch_Qch_J\}$ \\ \hline
$B_Jca_K$ & $\{B_Jca_K\}$ &$\{B_Jca_K\}$\\ \hline
$B_Jf_K$ & $\{B_Jf_K, B_Jf_Q\}$ &$\{B_Jf_K, B_Qf_K\}$\\ \hline
$Ch_Jb_KCa_J$ &$\{Ch_Jb_KCa_J\}$ &$\{Ch_Jb_KCa_J\}$ \\ \hline
$Ch_Jb_KF_J$ &$\{Ch_Jb_KF_J, Ch_Jb_QF_J\}$ &$\{Ch_Jb_KF_J, Ch_Qb_KF_Q\}$ \\ \hline
$Ch_Jch_K$ &$\{Ch_Jch_K\}$ &$\{Ch_Jch_K\}$ \\ \hline
$B_Jca_Q$ &$\{B_Jca_Q\}$ &$\{B_Jca_Q\}$ \\ \hline
$B_Jf_Q$ &$\{B_Jf_Q, B_Jf_K\}$ &$\{B_Jf_Q, B_Kf_Q\}$ \\ \hline
$Ch_Jb_QCa_J$ &$\{Ch_Jb_QCa_J\}$ &$\{Ch_Jb_QCa_J\}$ \\ \hline
$Ch_Jb_QF_J$ &$\{Ch_Jb_QF_J, Ch_Jb_KF_J\}$ &$\{Ch_Jb_QF_J, Ch_Kb_QF_K\}$ \\ \hline
$Ch_Jch_Q$ &$\{Ch_Jch_Q\}$ &$\{Ch_Jch_Q\}$ \\ \hline
\end{tabular}
\end{table*}

In our experiments we will assume a Dirichlet prior distribution with $\alpha_i$ initially 
equal to 2 for each action at each information set. Thus, the mean (as well as the mode) 
of the prior distribution selects each action with equal probability. Using initial parameter
values of 2 is a common choice in prior work, e.g.,~\cite{Southey05:Bayes}. Note that these values
satisfy the conditions from Proposition~\ref{pr:concave} that $\alpha_i \geq 1.$
For each set of experiments we will sample the opponent's strategy $\boldsymbol{\sigma}^*_{-i}$ from the prior distribution.
We will compare our new algorithm against several benchmark approaches. We denote our algorithm as FMAP for ``full
max a posteriori.'' We will compare against three previously considered sampling approaches:
Bayesian Best Response (BBR), Max a Posteriori Response (MAP), and Thompson's Response~\cite{Southey05:Bayes}. 
Note that none of these sampling approaches are consistent. In the experiments we will play the role of player 1.
We also compare performance against the Nash equilibrium strategy that performs best against $\boldsymbol{\sigma}^*_{-i}$ (player
1 has infinitely many Nash equilibrium strategies~\cite{Kuhn50:Simplified}). 
Note that the best Nash equilibrium has an ``unfair advantage'' in that we would
not actually know which Nash equilibrium performs best against $\boldsymbol{\sigma}^*_{-i}$ when we are playing. 
As a final benchmark we will compare against the actual best response to $\boldsymbol{\sigma}^*_{-i}$, which again has an
``unfair advantage'' since it would not be known during gameplay. We will compare performance of all approaches
against several different strategies for the opponent sampled from the prior distribution.  
For each of the opponent modeling algorithms, we can compute a best response to the modeled strategy at each iteration
either using expectimax or by solving the following linear program, where $\mathbf{y}$ denotes
player 2's modeled strategy in sequence form~\cite{Koller94:Fast}:

\[
\begin{array}{rrl} 
&\max_{\mathbf{x}}& \mathbf{x}^T (\mathbf{A} \mathbf{y}) \\ 
&\mbox{s.t.}& \mathbf{x}^T \mathbf{E}^T = \mathbf{e}^T \\
& & \mathbf{x} \geq \mathbf{0}\\
\end{array} 
\]

In our experiments we generate 100 opponents randomly from the prior distribution. We run each of the approaches against
each opponent for $T = 3,000$ iterations. Each of the sampling algorithms (BBR, MAP, and Thompson) samples $k = 10$ strategies
at the start of the match against each opponent. We use several different techniques to reduce variance in our experiments. 
First, the sampled strategies are the same for all three of the algorithms (though a separate set of $k$ sampled strategies 
is used against each opponent). We assume the cards dealt to the players are the same for all of the opponent modeling algorithms
at each iteration. At each iteration, we assume that the strategies played by all opponent modeling algorithms, as well as the true strategy
played by player 2, use the same randomization thresholds for their strategies at all information sets. For example, at iteration 100 of the experiments, suppose we deal player 1
King and player 2 Jack for all of the algorithms, and we ``deal'' player 1 a value of 0.68 for the randomization threshold for his initial information set with a King.
Then the algorithms for which player 1 bets with a King with probability above 0.68 will bet while the other algorithms will check. 
Additionally, at each iteration we calculate the expected payoff between each pair of strategies as opposed to the actual payoff. Since we know the full strategy
for player 1 at time $t$, call it $\mathbf{x},$ as well as the true strategy for player 2, call it $\mathbf{y},$ we can compute the true expected payoff by $\mathbf{x}^T \mathbf{A} \mathbf{y},$
rather than using just a single sampled value. However, we still play out the hand so that the opponent modeling algorithms are able to update their model based on observations of the 
opponent's play. For the fixed strategies BestResponse and BestNash, we simply calculate their expected payoff against the target strategy at the outset and do not need to simulate gameplay.

The results are given in Figure~\ref{fi:results}. The $x$-axis is the game iteration $t,$ and the $y$-axis is the expected profit achieved at time $t$ averaged over all of the 
opponent strategies. The BestResponse strategy has a constant value of 0.576, and BestNash has a constant value of 0.173. The game value to player 1 is $-\frac{1}{18} = -0.056$~\cite{Kuhn50:Simplified}, so BestNash is still beating the opponents for significantly more than the game value. All of the opponent modeling algorithms perform significantly better than BestNash starting at the first game iteration. All three of the sampling algorithms perform very similarly: the average expected payoff of BBR at the final iteration is 0.557, of Thompson is 0.547, and of MAP is 0.537. (It has been observed previously that these algorithms perform similarly when playing against opponents drawn from the prior~\cite{Southey05:Bayes}). We can see that after the first few hundred hands there is little change in their average per-period performance.
Our new algorithm FMAP quickly surpasses the performance of the sampling algorithms within the first hundred hands, ultimately achieving average expected payoff of 0.573 at the final iteration. FMAP clearly outperforms the sampling approaches, and nearly converges to the BestResponse value.

\begin{figure*}[!ht]
\centering
\includegraphics[width=\textwidth,height=\textheight,keepaspectratio]{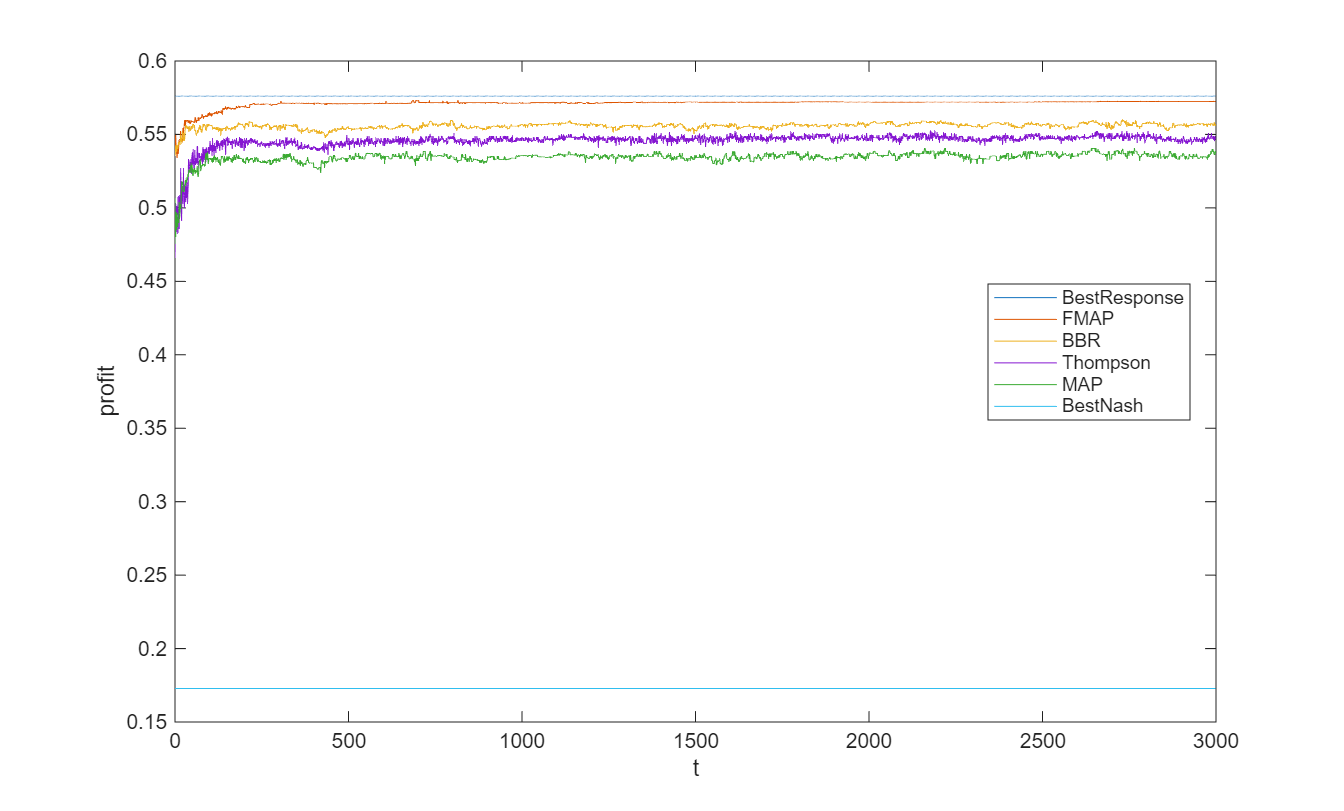}
\caption{Expected payoff as a function of game iteration for several opponent modeling algorithms and benchmark strategies. The results are averaged over 100 opponents generated randomly from the prior distribution. The sampling algorithms all use 10 samples.}
\label{fi:results}
\end{figure*}

We used relatively standard values for several hyperparameters in the implementation of gradient descent for the FMAP algorithm. We ran gradient descent until the norm of the difference in iterate solutions was less than $\num{1e-7}$ or for a maximum of 1000 iterations (typically far fewer than 1000 iterations were used). We initialized the learning rate $\eta$ to be 1, which was modified according to the Armijo condition in backtracking line search using hyperparameters $c = \num{1e-4},$ $\beta = 0.5$~\cite{Armijo66:Minimization} (we repeatedly multiplied $\eta$ by $\beta$ until the required condition was satisfied). We halted the algorithm if $\eta$ ever went below a ``safeguard'' value of $\num{1e-16}.$ We solved the projection subproblem as a convex quadratic program using Gurobi version 12.0.3~\cite{Gurobi25:Gurobi} on a laptop using Windows 11. In order to ensure that the subproblem did not return any vectors with component equal to 0, we used constraints $y_i \geq \num{1e-6}$ instead of $y_i \geq 0.$ 

\section{Related research}
\label{se:related}
While we have used a full best response strategy for all our algorithms, note that the meta-algorithm presented as Algorithm~\ref{al:meta} allows for arbitrary response function $r_i.$ A best response may
not be robust if the opponent's true strategy differs significantly from the opponent model, and other approaches may be preferable. Potentially more robust alternatives include a restricted Nash response~\cite{Johanson07:Computing} and related concept of an $\epsilon$-safe best response~\cite{McCracken04:Safe}. These approaches balance between exploitation and safety, maximizing performance against the opponent's strategy subject to a constraint on our strategy's exploitability. Note that these approaches are specific to two-player zero-sum games, while our new algorithm also applies to non-zero-sum games. An alternative approach for achieving robustness may be to apply counterfactual regret minimization (CFR)~\cite{Zinkevich07:Regret} or one of its sampling variants~\cite{Lanctot09:Monte}. CFR maintains counterfactual regret values for each action at each information set, and a popular variant uses regret matching to select the actions in proportion to their positive counterfactual regret at each iteration; this may be more robust than playing a pure best response. Note that CFR is not an alternative to our opponent modeling algorithm; it would just provide an alternative procedure to responding to the opponent model. The application of CFR in real time would require the aid of an opponent modeling algorithm to produce baseline counterfactual values for unobserved paths of play. For example, if we are player 1 in Kuhn poker and bet with a King and player 2 folds, we would not have any information about what our payoff would have been had we chosen to check instead unless we had an opponent model to use to estimate the counterfactual value. All of the alternative response approaches described in this paragraph require an accurate opponent model to be successful.

\section{Conclusion}
\label{se:conclusion}
In this paper we studied perhaps the simplest possible setting for opponent modeling in imperfect-information games: two-player games where the opponent is playing a static fixed strategy that is drawn from a known prior distribution. We showed that even in this simple setting, prior approaches fail to satisfy the very natural property that the opponent model converges to the opponent's true strategy in the limit as the number of game iterations approaches infinity. We refer to this property as \emph{consistency}, which is closely related to the concept of consistency from the field of statistics. We showed that the previously most successful algorithm, Bayesian Best Response, is not consistent, even if the target strategy is in the convex hull of the sampled strategies. We also introduced the concept of an \emph{observability function} which is needed to define and reason about repeated imperfect-information games. Several prior opponent modeling approaches assume that opponents' private information or actions off the path of play are always observed between game iterations, ignoring important nuances that occur in many real-world situations where information is partially observed.

We introduced a new algorithm that computes the strategy for the opponent that maximizes the posterior distribution assuming independent Dirichlet prior distributions at each information set. The Dirichlet prior is very common for modeling probability vectors in a multidimensional simplex. With no further information available, we could choose a Dirichlet distribution with equal parameters for each action (whose mean selects each action with equal probability). If historical information is available, we could alternatively use a Dirichlet prior distribution whose mean strategy is calculated from the data. An additional option that has proven successful in certain applications is to use a Dirichlet prior whose mean strategy is a Nash equilibrium (or approximation of one)~\cite{Ganzfried11:Game,Ganzfried24:Opponent}. Our algorithm runs efficiently by solving a convex minimization problem using projected gradient descent. 
We show that our algorithm achieves consistency under standard identifiability and visitation assumptions, unlike several prior algorithms that utilize sampling. 

We compared our algorithm against several other previously successful algorithms on Kuhn poker, a well-studied imperfect-information game. We demonstrated that our algorithm outperformed three successful sampling-based algorithms and nearly converged to the full best response. All of the opponent modeling algorithms significantly outperformed using the Best Nash equilibrium against the target strategy, illustrating the potential improvements that can be obtained by a successful opponent modeling algorithm over following classic game-theoretic solution concepts.

While we experimented on a two-player zero-sum game, our algorithm applies to non-zero-sum games as well. Since our algorithm is based on solving a convex minimization problem using a standard version of projected gradient descent, we expect the approach to be scalable to large problems of size comparable to other convex minimization problems solved by gradient descent. While the sampling algorithms run very quickly when a small number of samples is used, the number of samples required to obtain strong performance may grow exponentially in the size of the game tree. Adaptively adding in new sampled strategies is unlikely to improve performance, since the computational cost of computing posterior values for newly-added sampled strategies would be the same as if we had maintained them throughout all game iterations. Our algorithm could be extended to more than two players, though the optimization would no longer be a convex minimization problem due to products of variables. Now that we have a better understanding and successful approach for the setting of static opponents from a known prior distribution, we would like to investigate settings where the opponent plays a strategy that is not drawn from the prior distribution or is dynamic.

\bibliographystyle{plain}
\bibliography{C://FromBackup/Research/refs/dairefs}

\end{document}